\newtheorem{theorem}{Theorem}
\newtheorem{property}{Property}
\newtheorem{lemma}{Lemma}
\newtheorem{remark}{Remark}
\newcommand{\bad}{\tau}
\newcommand{\randNum}{{\tt randNum}}
\newcommand{\randCl}{{\tt randCl}}
\newcommand{\exchange}{{\tt exchange}}
\newcommand{\G}{\ensuremath{\widehat G}}
\newcommand{\leave}{{\tt Remove}}
\newcommand{\join}{{\tt Add}}
\newcommand{\link}{{\tt Link}}
\newcommand{\CTRW}{{\tt CTRW}}
\newcommand{\N}{{\def\N{\mbox{I\hspace{-.15em}N}}}}
\begin{document}
\title{Highly Dynamic Distributed Computing with Byzantine Failures}

\author{Rachid Guerraoui\protect\footnote{Email: rachid.guerraoui@epfl.ch; Tel: +41 21 693 5272}\\EPFL, Switzerland \and 
Florian Huc\protect\footnote{Email: florian.huc.@epfl.ch; Tel: +41 21 693 8125
}
\\EPFL, Switzerland \and Anne-Marie Kermarrec\protect\footnote{Email: anne-marie.kermarrec.@inria.fr; Tel: +33 2 99 84 25 98}\\ INRIA Rennes Bretagne-Atlantique, France}

\date{}


\maketitle

\begin{abstract}
This paper shows for the first time that  distributed computing can be both  reliable and efficient in an environment that is both highly dynamic and hostile. 
More specifically, we show how to maintain clusters of size $O(\log N)$, each containing more than two thirds of honest nodes with high probability,  within a system whose size  can vary \textit{polynomially} with respect to  its initial size. Furthermore, the communication cost induced by each node arrival or departure is polylogarithmic with respect to $N$, the maximal size of the system. 
Our 
clustering  can be achieved despite the presence of a  Byzantine adversary controlling a fraction $\bad \leq \frac{1}{3}-\epsilon$ of the nodes, for some fixed constant $\epsilon > 0$, independent of $N$.
So far, such a clustering could only be performed for systems who size can vary constantly and it was not clear whether that was at all possible for polynomial variances.
\end{abstract}

\small

\hspace{0.5cm}\textbf{Keywords:} Byzantine failures, random walks, dynamic networks
\normalsize

\newpage


\section{Introduction}

Distributed computing can be achieved reliably in a system where at most one third of the processes are controlled by an adversary. 
Typically, assuming some synchrony, the seminal agreement problem ~\cite{Lamport1982} can be solved and used to emulate a single highly available process. This is a basic building block to achieve distributed computations in a reliable manner.  Yet, with a large number of nodes, this technique is very expensive.  One way to reduce the complexity consists in clustering the nodes within smaller subsets, picked randomly, so  that each cluster contains two third of correct nodes whp, e.g., as proposed in \cite{Galil198781}. In short, instead of reducing a system of many processes into a system of one reliable process that performs the computation, the idea here is to reduce it to a system of several reliable processes, each corresponding to one of the clusters. These processes share the load of the computations reducing thereby their complexity.

So far, clustering techniques mainly assumed a static distributed system: the number $n$ of processes is assumed to be fixed a priori and processes do not join or leave the system \cite{giurgiu2010computing} (a few can typically fail). Some approaches have explored dynamic settings, but in a limited fashion: the number of processes $n$ is assumed to only vary by a constant factor \cite{scheideler,Awerbuch2009,SRDS12,Scheideler2005}. 
Yet, whether this is  at all possible to go beyond has been considered an open question so far \cite{King2011,King2010}.

This paper answers the question positively. We show, for the first time, that it is possible to perform distributed computing reliably and efficiently in a system which size can vary in a \textit{polynomial} manner. At the heart of this result  lies a new technique to partition nodes in a dynamic number of clusters, which involves a radical departure from previous schemes that assume a static number of clusters~\cite{scheideler,Awerbuch2009,SRDS12,Scheideler2005}. Indeed, tolerating an increase in the number of nodes from $n$ to $n^2$ (and more generally from $n^{1/y}$ to $n^z$ for some constants $y,z>1$), with a static number of clusters, yields a significant increase in the  number of nodes within each cluster, leading to a high-complexity computation, in the vein of a single cluster approach. However, handling dynamic clusters is not trivial. For instance, using classical De Brujin graphs for clustering~\cite{scheideler} in a dynamic setting requires a good estimation of the number of nodes. In turn, this potentially  requires techniques with high complexity, e.g., typically $\tilde O(n^{3/2})$~\cite{Katzir11}.

Our clustering approach achieves a polylogarithmic complexity by using random walks on expander graphs with small degrees. 
To ensure that each cluster contains two thirds of correct nodes with high probability, we exchange nodes between
 clusters whenever new nodes join or leave the system. The nodes that are candidate to the exchange are selected using continuous random walks~\cite{BookMarkovProcesses}. These provide a uniformly chosen sample even if the underlying graph is not regular. To ensure that a walk ends up fast on a node picked quasi uniformly, we connect clusters through small degree expanders~\cite{krebs2011expander}. 

The distributed construction of this expander requires specific care in regulating the choice of edges. Although several expanders could be used, our approach relies on OVER, a technique (Over-Valued Erd\"os R\`einy graph) from Erd\"os R\`einy random graphs to preserve a small degree and a good expansion coefficient. OVER is described in the Appendix for space reasons. This technique tolerates more crashes than \cite{Aspnes08,amin2004,law2003} and yields a different degree than \cite{AspnesW05}.   In the rest of the paper, we present NOW (Neighbors On Watch),  a protocol  maintaining the cluster partition  in Section~\ref{overview} and analyze it in Section~\ref{analysis}.  We review the related work in Section~\ref{related} and conclude in  
Section~\ref{conclusion}. 
Some background about continuous time random walk as well as more details about  possible applications of our clustering technique are also provided in the appendix.

\section{Model and background}\label{model}
\paragraph{System assumptions.} In short, our network model is the one of \cite{Awerbuch2009} with the difference that we allow the size of the system to vary \textit{polynomially}. 
More specifically, we consider a dynamic synchronous network with a discrete time variable $t_i$. Each node can send messages to any node it knows through a private channel; in this sense the network is reconfigurable as connections between nodes can be added or removed. We do not assume that each node knows all other nodes in the network (except during the initialization phase in which the global structure of the network is computed once). Instead, each node knows $\mathrm{polylog}(N)$ nodes and only knows an upper bound on the current size of the network. We also assume that, initially, the number of nodes is $n_{t_0}$ for some  $\sqrt{N} \leq n_{t_0} \leq N$ , and the current number of nodes $n$ in the network remains between $\sqrt{N}$ and $N$ (this can be relaxed to $N^{1/y}\leq n \leq N^{z}$ for all constants $y,z>1$). The size of the network can increase or decrease at any time. For simplicity of presentation, we  assume (as in~\cite{Awerbuch2009,law2003}) 
that when a node joins or leaves, the actions relative to previous joins and leaves are over. 
This corresponds to a time step.˜\footnote{However, the analysis can be generalized to several parallel join and leave operations.} Moreover, nodes do not need to take any specific action when leaving the network. Instead, we assume a mechanism enabling a node to detect if one of its neighbors has crashed or left the network.


\paragraph{Adversary model.} 
Our adversary is that of \cite{Awerbuch2009} with the difference that in our case it   controls a fraction $\tau \leq \frac{1}{3}-\epsilon$ (for some constant $\epsilon >0$) nodes, from the beginning ({\it{vs.}} $\tau \leq \frac{1}{2}-\epsilon$ after some initialization phase; note that using cryptographic tools, we could also assume $\tau \leq \frac{1}{2}-\epsilon$ by leveraging broadcast algorithms~\cite{garay2011adaptively}.).

NOW tolerates a static Byzantine (sometimes called \emph{active}) adversary controlling a fraction $\tau \leq \frac{1}{3}-\epsilon$ (for some constant $\epsilon >0$) of the nodes, having a full knowledge of the network at any time, as in \cite{Awerbuch2009,King2011}, i.e it knows the position of any node at any time. A typical objective for the adversary is to gain the lead in one (or more) of the clusters. At the beginning of the protocol, the adversary can choose a fraction $\tau$ of the nodes to corrupt. We assume that, at initialization, the honest nodes form a connected component, that the adversary cannot split it into disjoint parts, that each node controlled by the adversary is adjacent to at least one honest node, and that no honest node leaves or joins the network until the initialization is over. Also, nodes' identities cannot be forged. Moreover during the execution of the protocol, each time a node joins the network, the adversary can choose to corrupt it or not, as in \cite{scheideler,Awerbuch2009}. However, the adversary cannot decide to corrupt nodes at a later time (in this sense the adversary is static and not adaptive).  Furthermore, the adversary can induce churn as in \cite{scheideler,Awerbuch2009} by join-leave attacks or by forcing honest nodes to leave the system (e.g., through a DOS attack). The size of the network can vary polynomially and each node is assigned a unique identifier.

\paragraph{Background on OVER: expander graph.}

Our clustering technique, which we call NOW (Neighbors On Watch) and that maintains a cluster partition is based on a protocol to distributely maintain an expander overlay.
 Although  various expanders (e.g. \cite{AspnesW05}) could be used,  we assume that  NOW relies on OVER. For space reasons, the detailed description of OVER is deferred to the appendix (Section~\ref{expander}).  In OVER,  the graph vertices represent the clusters of nodes maintained by NOW, hence they can be considered as honest since each vertex is composed of more than two thirds of honest nodes whp. We further assume that each vertex leaving the overlay graph is chosen at random (this assumption will be ensured in Section~\ref{maintenance}). 

OVER ensures  that, starting from a random graph drawn from the Erd\"os-R\'enyi model, whp, at any time during a sequence of vertex additions and removals polynomial in $N$, the resulting graph has a large isoperimetric constant and a low degree (ensuring properties \ref{p(G)} and \ref{pMaxDegree}).  We  use the notation $\G^R$ where the $\hat\ $ relates to the fact that we consider an overlay, and the $\ ^R$ that it is an instance of a random graph. The evolution of the graph is represented by a sequence $\G^R_{t_0},\ldots, \G^R_{t_i}, \ldots$. $n_{t_i}$ denotes the number of vertices of $\G^R_{t_i}$. 

\begin{property}\label{p(G)} 
Whp, at any time $t$ after a number of time steps polynomial in $n$, $\G^R_{t}=(\hat V^R_{t},\hat E^R_{t})$ has an isoperimetric constant $I(\G^R_{t}) \geq \log^{1+\alpha} N/2$, where:

 $I(\G^R_{t})= \inf_{S\subset \hat V^R_{t}:|S|\leq n_t/2}E(S,\bar S)/|S|$.
\end{property}
\begin{property}[Maximum degree of $\G^R$] \label{pMaxDegree}
Whp, at any time $t$ after a number of time steps polynomial in $n$, $\G^R_{t}$ has maximal degree at most $c\log^{1+\alpha} N$ for a large enough constant $c$ and an arbitrarily small (pre-)chosen constant $\alpha$. 
\end{property}

Those properties enable to achieve short random walks leading to pick nodes uniformly at random. 
Note that OVER enables to tolerate simultaneous failures as long as the targets are picked uniformly at random. NOW together with OVER can also tolerate the failures of nodes chosen by the adversary as long as one failure per round is assumed.

\paragraph{Notations.} We use the time step as a subscript to indicate the instant at which a variable is considered (e.g., $n_{t_i}$ is the number of nodes at time $t_i$, $\#C_{t_i}$ is the number of clusters, and $|C_j|_{t_i}$ the size of $C_j$). We may omit the index of the time step when there is no ambiguity (e.g., $n$ stands for the current number of nodes in the network).  
The \emph{communication cost} is the number of messages\footnote{We consider messages of identical size. Hence the communication cost is proportional to the number of bits sent.} exchanged, and the \emph{round complexity}, is the number of communication rounds (i.e. the number of successive messages) required by a protocol to terminate. Notice that a time step is composed of several communication rounds, but we will prove that they are $\mathrm{polylog}(N)$. Given a graph $G=(V,E)$, and a vertex $v\in G$, we denote by $d_v$ its degree. Similarly, for a given cluster $C$, $d_C$ denotes the number of clusters adjacent to $C$. 



\section{NOW: Overlay of clusters}\label{overview}

NOW (Neighbors On Watch) maintains both an overlay of clusters and the  partition of the nodes into clusters. NOW relies on the fact that the overlay is  guaranteed to have a low maximum degree and good expansion properties. This is provided by the protocol OVER that we present in the appendix but could also be 
ensured by other protocols which differ either in the number of failures they can provide \cite{Aspnes08,amin2004,law2003}
 or their degree (e.g., 4 in \cite{AspnesW05} instead of $\log^{1+\alpha} N$ in OVER for some arbitrarily small constant $\alpha >0$) ).  NOW further ensures that each cluster contains more than two thirds of honest nodes whp. The clusters have size $O(\log N)$ and are used to inhibit the behavior of the Byzantine nodes. NOW relies on  two phases:  \textit{initialization} and  \textit{maintenance}. In a nutshell, the initialization phase generates the initial overlay, while the maintenance phase ensures that after a polynomially long sequence of leave and join operations, the required properties still hold. The overlay $\G^R$ is first constructed during the initialization phase of NOW, and recursively maintained by OVER as described in Appendix, Section~\ref{expander}.

\subsection{Preliminaries.}
A node of a  cluster $C$  is linked to all the other nodes of $C$ and knows their identities.
An edge between two clusters $C_i$ and $C_j$ in $\G^R$ means that all nodes of $C_i$ are linked to all  nodes of $C_j$ and know their identities (and \emph{vice-versa}). A node only needs to know the identities of the nodes in its cluster and the neighboring ones. The initialization phase (Section \ref{initialisation}) is itself divided into two sub-phases. First, a discovery algorithm is run in order for the nodes to acquire a global knowledge of the network. Afterwards, a Byzantine agreement algorithm~\cite{King2010} is used to construct an initial overlay of clusters. The maintenance phase ensures that each cluster contains more than two thirds of honest nodes whp when nodes join or leave and preserves the properties of the overlay.


\begin{figure*}[htb]
  \begin{center}
\scalebox{0.6}{\input{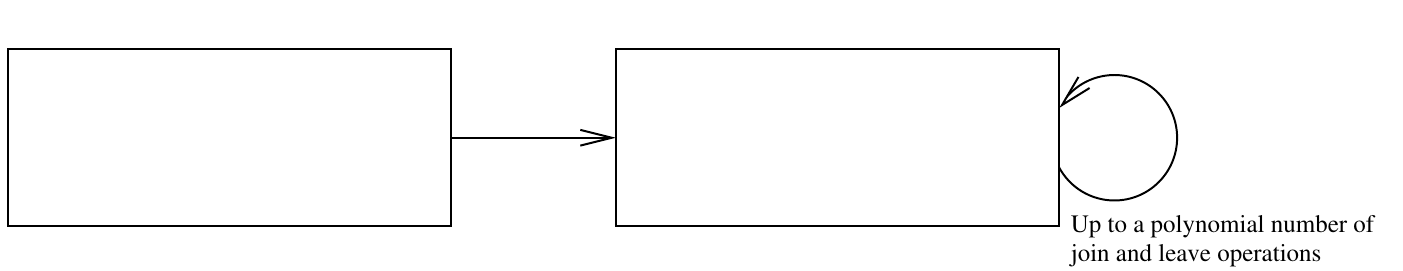_t}}
\caption{Overview of NOW.}\vspace{-0.6cm}
   \end{center}
\end{figure*}

\paragraph{Random number generation.}
We assume the existence of \randNum, a  distributed random number generation protocol, enabling the nodes of a cluster to agree on a common integer chosen uniformly at random from the interval $(0,r)$. \texttt{randNum} is secure as long as the Byzantine nodes are less than two thirds in the cluster and is presented in the appendix (Section \ref{randnum}).

\paragraph{Cluster random choice.}
Furthermore, we assume the existence of a function called {\randCl} (Algorithm \ref{algo_id_randCl} in Appendix), to randomly select a cluster. To achieve the random selection
 (\randCl),  we perform a biased CTRW~\footnote{A vertex $C_i$ of $\G^R$ is a cluster in $G$. A biased CTRW from $C_i$ is a sequence of CTRW as follows: the nodes of $C_i$ choose collaboratively the next cluster $C_j$ and decrease the duration of the CTRW using {\randNum} which goes on similarly. When the remaining duration is negative or null, a random number between 0 and 1 is chosen. If it is smaller than $|C_i|/\max_{C} |C|$, the biased CTRW ends, otherwise a CTRW starts again. A node of a cluster $C_j$ pursues the random walk if and only if it receives an identical message from at least half plus one of the nodes of the neighboring cluster from which the CTRW comes.} on $\G^R$, the overlay.  We bias our CTRW such that  a cluster is chosen  according to the distribution $(|C_i|/n)$. With clusters of size $O(\log N)$, this primitive has an expected communication cost of $O(\log^5 N)$. Indeed, the expected number of clusters visited during the walk is $O(\log^3 N)$ (whp, we do $O(\log n)$ CTRW each of length $O(\log^2 n)$) and at each cluster a random integer from the range $(0,O(\log^{1+\alpha} N))$ is generated at a cost of $O(\log^2 N)$. The expected round complexity of this primitive is $O(\log^4 N)$. 

\paragraph{Node shuffling.} In order to avoid an adversary to focus on one cluster and gradually pollute it with Byzantine nodes, shuffling nodes between clusters 
is  necessary upon nodes arrival and departure. The shuffling is implemented by the algorithm called {\exchange} and detailed in Appendix (Algorithm \ref{algo_id_exchange}).
Basically  some clusters exchange their nodes with nodes chosen at random from other clusters. For each node $x$ to be exchanged from cluster $C$ ($x$ is determined by the protocol {\exchange}), a cluster is chosen at random using \randCl . The chosen cluster, $C'$, is informed that it will receive $x$. The cluster $C'$ chooses one of its nodes (using the primitive \randNum) to send in replacement of $x$. 
During an exchange, if $C$ is adjacent to another cluster, the nodes of this cluster are informed of the  new composition of $C$. This step is fundamental since a node from a neighboring cluster accepts a message from $C$ if and only if at least half plus one of the nodes of $C$ send it. The new nodes of $C$ are informed by the former nodes of this cluster of the local structure of the overlay (i.e., the neighboring clusters of $C$ in the overlay). The expected communication cost and round complexity of {\exchange} are $O(\log^6 N)$ and $O(\log^4 N)$.

\subsection{NOW: Initialization Phase}
\label{initialisation}

\paragraph{Network Discovery.}
The protocol starts by running an algorithm that informs each node of the identifiers of all other nodes. The global knowledge of the nodes in the network is needed only at initialization. Note that this computation is performed while the size of the network is still ``small'' in practice. Afterwards, it is possible to use standard off-the-shelf Byzantine agreement algorithms to construct an initial partition forming the vertices of the overlay~$\G^R$.  This algorithm (Algorithm~\ref{algo_id_distribution} provided in Appendix) terminates after a number of communication rounds at most the diameter of the graph considering only the edges adjacent to at least one honest node. When the algorithm terminates, it is guaranteed that all honest nodes know the identities of all nodes in the network. Its communication cost is $O(n\times e)$ where $e=|E|$ (see Appendix for the theorem and details).

\paragraph{Clusterization.}
Once all the honest nodes know the identities of all the nodes in the network, any Byzantine agreement protocol can be used, such as \cite{King2010} whose complexity is $\tilde O(n\sqrt{n})$. This protocol works in the presence of a static Byzantine adversary controlling less than $1/3-\epsilon$ of the nodes for some positive constant $\epsilon$.
A representative cluster of logarithmic size containing more than two thirds of honest nodes is selected. Afterwards, we use the nodes of this representative cluster to randomly partition the network into $\#C$ clusters, $\{C_1,\dots,C_{\#C}\}$, each of size $k\log N$, for some constant $k$. The constant $k$ is a security parameter of the protocol that is chosen \emph{a priori} depending on the requirements of the application considered: the higher $k$, the less chances the adversary has to control more than a third of the nodes of one of the clusters. Choosing the partition at random ensures that whp, there is more than two thirds of honest nodes in each cluster. This can be proved using standard Chernoff bound and union bound arguments. To obtain a random partition, it is sufficient for the representative cluster to order the nodes at random by calling the primitive {\randNum}. Once the random ordering has been computed, the partition is obtained by taking for each cluster $k\log N$ successive nodes.  Afterwards, $\G^R_{t_0}$ is initiated on top of this partition: for each pair of clusters, the representative cluster determines with probability $p=\log^{1+\alpha} N/\sqrt{N}$ whether or not they will be linked by an edge in $\G^R_{t_0}$. Finally, the representative cluster tells each node $x$ the cluster it belongs to, the identities of the other nodes in this cluster, and the adjacent clusters as well as their composition (i.e., the identities of the nodes). The node $x$ is ``linked'' to all these nodes and can for efficiency purposes forget the identifier of any other node that it may know. It is fundamental for the security of our protocol that each cluster contains more than two thirds of honest nodes. Indeed, a node receiving a message from all the nodes of a particular cluster considers this message valid if and only if, it receives the same message from more than half of the nodes of this cluster. Using this rule for inter-cluster communication, together with the condition that each cluster has more than two thirds of honest nodes, is sufficient to ensure the correctness of the protocol. 

\subsection{NOW: Maintenance Phase}\label{maintenance}

\begin{figure*}[tb]
  \begin{center}
\scalebox{0.42}{\input{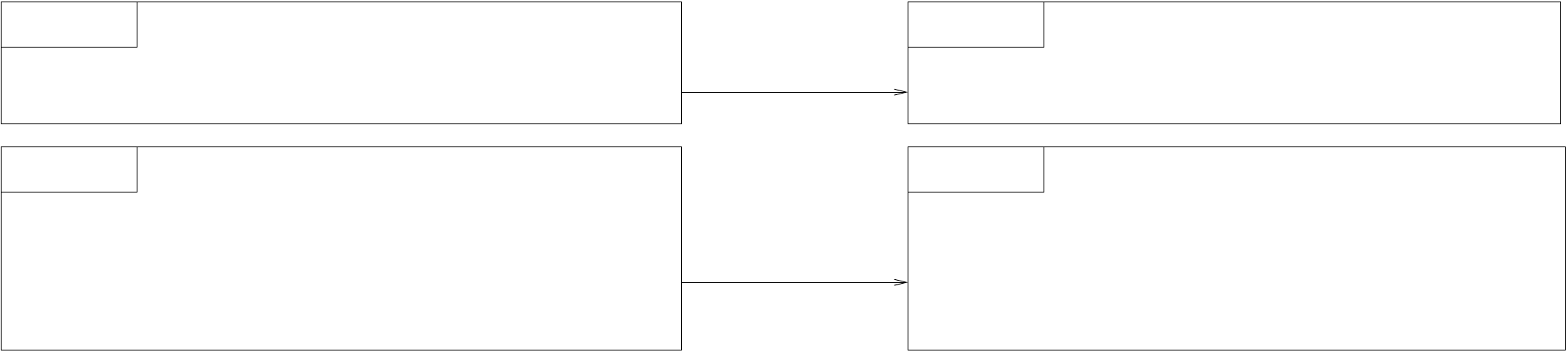_t}}
\caption{Maintenance of the overlay. Each operation has a $\mathrm{polylog}(N)$ complexity.}
   \end{center}
\end{figure*}

While the initialization phase of NOW ensures the desired properties for both the overlay and the clusters, maintaining these properties under high dynamics is challenging. In this section, we describe how to preserve the property that each cluster is composed of an honest majority in the presence of nodes join and leave operations.  
Shuffling the network is crucial  at this point as mentioned in \cite{scheideler,Awerbuch2009,Scheideler2005} to avoid the adversary to control a majority of nodes in a cluster after a few steps by using a very simple strategy: the adversary chooses a specific cluster and keeps adding and removing the Byzantine nodes until they fall into that cluster. Similarly, it is crucial to introduce dynamics with shuffling if nodes are forced to leave the network by the adversary. The shuffling is generated upon \textit{Join} and \textit{Leave} operations. Complementary, the \textit{Split} and \textit{Merge} operations ensure that the clusters remain of size $\Omega(\log N)$, and that the required properties of $\G^R$ (i.e., expansion and low maximum degree) are preserved. 

The NOW following operations are invoked by the nodes upon joining, or leaving the network, or simultaneously by all the nodes of a cluster involved in a split or a merge operation.

\paragraph{Join.}
This operation (as well as the leave operation), initiated by a node joining the network, is inspired by \cite{scheideler,Awerbuch2009,Scheideler2005}. When a node $x$ joins the network, we assume that it gets in contact with a cluster of the overlay. This cluster chooses another cluster using {\randCl} in which $x$ is inserted. The chosen cluster proceeds by inserting $x$ and uses {\exchange} for all of its nodes. This operation has a communication cost of $\mathrm{polylog}(N)$.

\begin{algorithm}
\caption{Join operation.}
\label{algo_id_join}
\footnotesize
\begin{algorithmic}
\REQUIRE Node $x$ contacting cluster $C$ to join the network.
\ENSURE The preservation of the properties of the overlay and of the clusters.
\STATE Nodes of $C$ choose a cluster $C'$ using {\randCl}.
\STATE All nodes of $C'$ add $x$ to their local view of $C'$.
\STATE All nodes of $C'$ send a message to all the nodes from the neighboring clusters informing that $x$ is added to $C'$.
\STATE All nodes of $C'$ send their neighborhood to $x$ using the path used to find $C'$ in {\randCl}.
\IF{$|C'| > kl\log n$}
\STATE Nodes of $C'$ compute a partition of $C'$ into two parts of roughly the same size using {\randCl}: $C_1$ and $C_2$.
\STATE Nodes of $C_1$ keep their neighborhood.
\STATE Nodes of $C_1$ and $C_2$ send a message informing that $C'$ is replaced by $C_1$ to the neighbors of $C_1$.
\STATE Nodes of $C_2$ are given a new neighborhood using $\join(C_2)$ (Algorithm~\ref{algo_id_add} of OVER).
\ENDIF
\end{algorithmic}
\end{algorithm}

\paragraph{Split.} This operation is initiated simultaneously by all  nodes of a cluster $C$ if after a join operation, the size of this cluster is larger than $lk\log N$ for some fixed parameter $l$ ($l$ is a constant greater than $\sqrt 2$ which influences the number of split and merge operations). Then $C$ has to be split in two, the old and the new clusters. To this end, the nodes of $C$ generate a random partition of $C$. The old cluster keeps its neighbors in $\G^R$, whereas the new cluster is added to the overlay using {\join} as described in Section \ref{expander}.
This procedure has a communication cost of $\mathrm{polylog}(N)$ and a $O(\log^4 N)$ round complexity. Recall that each node knows the exact composition of its cluster, therefore a split operation can be easily achieved.

\paragraph{Leave.}  This operation occurs when a node from a cluster $C$ leaves the network or when the other nodes of $C$ detect its absence. $C$ exchanges all its nodes using the primitive \exchange. Then, a cluster receiving one or more nodes from $C$ execute {\exchange} for all of its nodes. This process has a communication cost of $\mathrm{polylog}(N)$ and a $O(\log^4 N)$ round complexity.

\begin{algorithm}
\caption{Leave operation.}
\label{algo_id_leave}
\footnotesize
\begin{algorithmic}
\REQUIRE Node $x$ from a cluster $C$ leaving the network.
\ENSURE The preservation of the properties of the overlay and the clusters.
\STATE Nodes of $C$ remove $x$ from their view.
\STATE Nodes of $C$ send a message to their neighbors informing them to remove $x$ from their view.
\STATE A node that is a neighbor of $C$ receiving a message to remove $x\in C$ from more than half of the nodes of $C$ removes it from its view.
\STATE $C$ exchanges its nodes using {\exchange}.
\STATE A cluster exchanging one or more of its nodes with $C$ execute the {\exchange} procedure.
\IF{$|C'| < k\log n/l$}
\STATE Nodes of $C$ inform all their neighbors that $C$ is removed.
\STATE Nodes of $C$ execute $\leave(C_1)$ ~(Algorithm \ref{algo_id_remove}) of OVER.
\STATE A node that is a neighbor of $C$ receiving a message that $C$ is removed from more than half of the nodes of $C$ removes it from its view.
\STATE Nodes of $C$ execute Algorithm \ref{algo_id_join} as to rejoin the network.
\ENDIF
\end{algorithmic}
\end{algorithm}

\paragraph{Merge.} 
This operation is initiated simultaneously by all nodes of a cluster $C$ containing less than $\frac{k\log N}{l}$ users (for the same fixed parameter $l$ described previously).
 In this situation, a cluster, chosen at random in order to ensure Properties \ref{p(G)} and \ref{pMaxDegree}, has to be removed. This is achieved using the primitive {\randCl}. 
 Nodes in $C$ proceed as if they were joining the network while the nodes from the chosen cluster $C'$ become members of $C$. In $\G^R$, $C'$ is removed by using the operation {\leave} described in Section \ref{expander}.

\section{NOW: Analysis}
\label{analysis}

In this section, we prove that after a polynomial sequence of join and leave operations (some of them inducing some splitting and merging of clusters), each cluster contains more than two thirds of honest nodes as long as the fraction of Byzantine nodes $\bad$ controlled by the adversary is smaller than $1/3-\epsilon$ (for some constant $\epsilon>0$ independent of $n$). 

The results are proved under the assumption that the
random choices of nodes are perfectly uniform (i.e, the small bias induced
by the random walk is ignored). This assumption is justifed by the fact that we consider a mixing
time after which the distance from the desired distribution is $O(n^{-c})$ 
for some arbitrarily large constant $c$.
More specifically, we describe the output of a CTRW using two random variables $X$ and $Y$. $X$ indicates whether or not the output of the CTRW has the desired distribution and is defined as follows: we consider the probability distribution $\cal D$ of the endpoints of a CTRW, and set $p_v$ as the probability node $v$ is hit.
Set $p_{min} = \min_v (p_v)$. The binary random variable $X$ has value 1 with probability $n\times p_{min}$ and 0 otherwise. $Y$ is equal to node $v$ with probability $(p_v-p_{min})/(\sum_w (p_w-p_{min}))$. We can reproduce $\cal D$ by first evaluating $X$.
Then, if $X=1$, the endpoint is picked   according to the desired distribution. Else, the endpoint is picked according to $Y$. We have $P(X=0) \leq n \times \max(p_v-p_{min}) = O(n^{-c+1})$, which means that the probability of the endpoint not to be picked as desired is $O(n^{-c+1})$. Conditional to that, in the following we assume that the random choices made using a CTRW are as desired, i.e ($|C|/n$) for each cluster $C$ where $|C|$ is its size.

\subsection{Status of a cluster after \exchange}

At each time step, we assume that either a join or leave operation takes place or nothing occurs.  
These operations may in turn induce the splitting or merging of clusters. A split operation is done directly at the time it occurs, whereas, when a cluster executes a merge operation, we consider that its nodes re-join the network in subsequent time steps inducing normal join operations. Given a cluster $C$, $p^C_{t}$ is the proportion of Byzantine nodes in $C$ at time~$t$. 
\setcounter{lemma}{0}
\begin{lemma}[2/3 of honest nodes in a cluster]\label{afterExchange}
If a cluster $C$ has exchanged all its nodes at time step $t$, we have $P(p^C_{t}>\tau(1+\epsilon))\leq n^{-\gamma}$, for any positive constant $\gamma$, as long as the security parameter $k$ is large enough.
\end{lemma}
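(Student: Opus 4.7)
My plan is to analyze first the probability that a \emph{single} replacement node brought into $C$ by \exchange{} is malicious, and then to apply a Chernoff concentration bound across the $|C|$ independent replacements performed during the exchange. All cluster sizes are kept in the range $[k\log^2 N/l,\, lk\log^2 N]$ by the Split and Merge operations, and by Theorems~\ref{thmI(G)} and \ref{thmMaxDegree} each call to \randCl{} returns a uniformly random cluster up to a bias of $O(n^{-\log n})$, which is negligible compared to any $n^{-\gamma}$ target.

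For one replacement, \exchange{} first picks a destination cluster $C'$ uniformly at random among the $\#C$ clusters, and then the nodes of $C'$ pick one of their members uniformly at random via \randNum{}. Letting $m_j$ denote the number of malicious nodes in cluster $C_j$, the probability that the returned node is malicious is
\[
\sum_{j=1}^{\#C} \frac{1}{\#C}\cdot\frac{m_j}{|C_j|}
\;\leq\; \frac{l}{\#C\cdot k\log^2 N}\sum_{j=1}^{\#C} m_j
\;=\; \frac{l\,\tau\,n}{\#C\cdot k\log^2 N}
\;\leq\; \frac{l\,\tau\,\#C\cdot lk\log^2 N}{\#C\cdot k\log^2 N}
\;=\; l^2\tau,
\]
where the first inequality uses $|C_j|\geq k\log^2 N/l$ and the last uses $n\leq \#C\cdot lk\log^2 N$. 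So each replacement independently has malicious probability at most $p:=l^2\tau$, and this is exactly where the factor $l^2$ in the statement comes from. Note that a static adversary cannot bias the choice of $C'$ or of the node inside $C'$, because these are drawn via the honest-majority primitives \randCl{} and \randNum{}.

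Next, I would observe that the $|C|$ replacements made during one call to \exchange{} use fresh, independent invocations of \randCl{} and \randNum{}, so the indicators $X_1,\ldots,X_{|C|}$ of ``the $i$-th replacement is malicious'' are mutually independent (up to a total variation error of order $|C|\cdot n^{-\log n}$, which is absorbed in the final bound). Let $X=\sum_i X_i$; then $\mathbb{E}[X]\leq p\,|C|$ with $|C|\geq k\log^2 N/l$, so $\mu:=p\,|C|\geq l\tau k\log^2 N$. Applying the standard multiplicative Chernoff bound with deviation parameter $\epsilon$,
\[
\Pr\bigl[\,p^C_{t_i} > l^2\tau(1+\epsilon)\,\bigr]
\;=\;\Pr\bigl[\,X > (1+\epsilon)\mu\,\bigr]
\;\leq\; \exp\!\Bigl(-\frac{\epsilon^2\mu}{3}\Bigr)
\;\leq\; \exp\!\Bigl(-\frac{\epsilon^2\,l\,\tau\,k}{3}\log^2 N\Bigr).
\]
Since $n\leq N$, choosing the security parameter $k$ to satisfy $k \geq 3\gamma/(\epsilon^2 l\,\tau)$ makes the right-hand side at most $N^{-\gamma\log N}\leq n^{-\gamma}$, as required.

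\paragraph{Main obstacle.} The routine part is the Chernoff tail; the delicate point is justifying that every replacement truly behaves like an i.i.d.\ draw of malicious probability at most $l^2\tau$. This relies on (i) the invariant that all cluster sizes lie in $[k\log^2 N/l,\,lk\log^2 N]$ (maintained by Split/Merge and used tightly in the $l^2$ factor), (ii) the fact that \randCl{} really outputs a near-uniform cluster (Theorems~\ref{thmI(G)} and \ref{thmMaxDegree} applied to $\widehat G$), and (iii) the fact that on honest-majority clusters the adversary cannot skew the outputs of \randNum{} or \randCl{}. Once these three ingredients are in place, the independence of successive replacements and the Chernoff step are standard.
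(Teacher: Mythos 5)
Your proposal is correct and follows essentially the same route as the paper: bound the per-replacement malicious probability by $l^2\tau$ using the cluster-size invariant $[k\log^2 N/l,\,lk\log^2 N]$ together with $\sum_j p^{C_j}|C_j|\leq\tau n$, then apply a multiplicative Chernoff bound over the $|C|\geq k\log^2 N/l$ replacements and take $k$ large enough. Your additional remarks on the near-uniformity of \randCl{} and the independence of successive draws only make explicit what the paper assumes globally, so there is no substantive difference.
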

\begin{proof}
When a cluster $C$ exchanges one of its nodes with another cluster, this cluster is first selected at random according to the probability distribution $(|C|_{t_i}/n)$, and then a node is chosen out of it uniformly at random. In this scenario, the probability of performing an exchange with a Byzantine node is~$\tau$. 

Using standard Chernoff bound arguments, we can derive the following result on the number $X$ of Byzantine nodes among $|C|_{t_i}$ nodes: $P(X>(1+\epsilon)\tau|C|_{t_i})\leq e^{-\epsilon^2\tau |C|_{t_i}/3}$. Therefore as $|C|_{t_i} \geq (k\log N)/l$, we have  $P(X>(1+\epsilon)\tau|C|_{t_i})\leq N^{-\gamma}$ when $k$ is sufficiently large for some constant~$\gamma$.
\end{proof}

This lemma is a consequence of the Chernoff bound arguments \cite{ProbabilisticMethod} and implies that to obtain more than two thirds of honest nodes in a cluster whp, it is sufficient that $\tau+\epsilon < 1/3$, which is true by assumption on~$\bad$. 

\begin{remark}[Increasing the robustness] One can tolerate a fraction of Byzantine nodes up to $1/2-\epsilon$, but then we need to use cryptographic tools to allow for broadcast and Byzantine agreement.\end{remark}

\subsection{Evolution of the divergence}

To summarize, we have seen that each time a cluster exchanges all of its nodes, as long as $\tau(1+\epsilon)<1/3$, we obtain more than two thirds of honest nodes whp in the resulting cluster. We now proceed by proving that in between two exchanges, this property also holds. To realize this, we focus on a specific cluster $C$ and consider a sequence of $s$ join and leave operations. 

We first prove that if the cluster has less than a $\tau(1+\epsilon/2)$ fraction of Byzantine nodes, then after it has exchanged $O(\log N)$ of its nodes, it does not have more than a $\tau(1+\epsilon)$ fraction of Byzantine nodes. Then, we prove that if it has between a $\tau(1+\epsilon/2)$ and  $\tau(1+\epsilon)$ fraction of Byzantine nodes, then after it has exchanged $O(\log N)$ of its nodes, it has less than a $\tau(1+\epsilon/2)$ fraction of Byzantine nodes whp.

\begin{lemma}\label{1}
If a cluster $C$ has less than $\tau(1+\epsilon/2)|C|$ Byzantine nodes, then after  $O(\log N)$ node exchanges with nodes chosen uniformly at random, the cluster does not contain more than $\tau(1+\epsilon)|C|$ Byzantine nodes whp.
\end{lemma}
\begin{proof}
A cluster $C$ with a fraction $p$ of Byzantine nodes has a probability at most $p(1-\tau)$ to have this fraction decreased by $1/|C|$, and at least $(1-p)\tau$ to have it increased by the same amount. If this fraction is at most $\tau(1+\epsilon/2)$, we prove that it increases by $\epsilon$ with probability $o(1/N^\gamma)$, for $\gamma$ being arbitrarily large depending on the chosen value of~$k$. 

The fraction of Byzantine nodes in the cluster is dominated by the martingale with starting state $\tau(1+\epsilon/2)$, which increases or decreases by $1/|C|$ with probability $\tau$. We now show that whp, this martingale will not exceed $\tau(1+\epsilon)$ after $O(\log N)$ steps (recall that $k\log N/l\leq |C| \leq kl\log N$). 

For $k$ large enough, let $T^{exchange}$ stands for the number of exchanges. It is $O(\log N)$ and hence there is a constant $M$ such that $T\leq M\log N$. We can derived from Azuma-Hoeffding's inequality that:
\begin{align*}
Prob(p^{C}>\tau(1+\epsilon/2)) & <e^{-\epsilon^{2}/4\sum_{i=1}^{T^{exchange}}1/|C|^2} \\
&\leq e^{-\epsilon(k/l)^2 \log^2 N/4(M\log N)}\\
&=e^{-\epsilon(k/l)^2 \log(N)/4M}=n^{-\gamma}
\end{align*}

\end{proof}

Similarly, if a cluster has more than a $\tau(1+\epsilon/2)$ fraction of Byzantine nodes, we have that after $O(\log N)$ exchanges, the cluster has less than a $\tau(1+\epsilon/2)$ fraction of Byzantine nodes.

\begin{lemma}\label{2}
Given a cluster $C$ whose fraction of Byzantine nodes is between $\tau(1+\epsilon)$ and $\tau(1+\epsilon/2)$ (for some constant $\epsilon > 0$ independent of $n$), then whp, the fraction of Byzantine nodes in this cluster is less than $\tau(1+\epsilon/2)$ after $O(\log N)$ exchanges with nodes chosen uniformly at random. 
\end{lemma}
\begin{proof}

We use the same arguments for the previous theorem. Here, the fraction of Byzantine node will decrease of $1/|C|$ with probability at least $\tau(1+\epsilon/2)$ and will increase by $1/|C|$ with probability~$\tau$. Therefore, as we start from a fraction of at most $\tau(1+\epsilon)$, whp, after $O(\log N)$ exchanges, the fraction of Byzantine nodes in this cluster is less than $\tau(1+\epsilon/2)$. 
\end{proof}

When we look at a sequence of $s$ exchanges affecting a given cluster $C$, we can split this sequence in alternating sub-sequences to apply Lemmas \ref{1} and \ref{2}. Some sequences might lead to a fraction of Byzantine nodes between $\tau(1+\epsilon/2)$ and $\tau(1+\epsilon)$, while the following one will lead to a fraction of Byzantine nodes bellow $\tau(1+\epsilon/2)$ whp. Hence, for a sequence $s$ whose length is polynomial, by the union bound, we obtain that is there is always (whp) more than two thirds of honest node in each cluster for an adequate~$k$.

\setcounter{theorem}{2}
\begin{theorem}
\label{cor}
Whp, after a number of steps polynomial in $N$, at each time step, all clusters are composed of more than two thirds of honest nodes.
\end{theorem}
\begin{proof}
Notice that to apply the previous lemmas, one has to ensure that the exchanged nodes are replaced by nodes chosen uniformly at random. This is ensured by our join and leave operations. This is clear for a join operation by the use of a biased CTRW to select the replacement node. For a leave operation, this is also clear for the cluster $C$ from which the node leaves has its nodes exchanged with nodes selected uniformly at random. However, if we look at a cluster $C'$ with which $C$ has exchanged nodes, then the probability that $C'$ receives a Byzantine node is not necessarily $\tau$ as it is equal the proportion of Byzantine nodes in $C$. This is why we enforce $C'$ to exchange all its nodes.

Now, given a specific cluster, $C$ we consider an alternating sequence of time steps $t_1, \dots, t_i, \dots$ when the fraction of nodes controlled by the adversary in $C$ becomes larger or equal to $\tau(1+\epsilon/2)$ and when it becomes smaller.

Consider $i$ such that at $t_i$ the fraction of nodes controlled by the adversary in $C$ is less than $\tau(1+\epsilon/2)$ (this is in particular true at the beginning). Then at $t_{i+1}$, it becomes greater or equal to $\tau(1+\epsilon/2)$ and is less than $\tau(1+\epsilon)$. Lemma \ref{2} ensures that time step $t_{i+2}$ comes within $O(\log N)$ steps, and Lemma \ref{1} ensures that between $t_{i+1}$ and $t_{i+2}$, the adversary never controls more than a $\tau(1+\epsilon)$ fraction of nodes of the cluster.

By an union bound over all clusters, we have the announced result.
\end{proof}
\begin{remark}
\label{third}
Considering an adversary controlling at most a fraction $1/r - \epsilon$ of the nodes for some constant $\epsilon >0$ and $r\geq 2$ independent of $n$, it is possible to strengthen Theorem \ref{cor} to obtain that in all the clusters the adversary controls at most a fraction $1/r$ of the nodes.
\end{remark}

\section{Related Work}\label{related}

Several authors  studied the impact of dynamics on distributed computations~\cite{BaldoniBR09,BaldoniBN11} and overlay networks. In \cite{Baumann2009,kuhn2010distributed,kuhn2011}, the communication links of a dynamic network may be modified by the adversary  under some connectivity restrictions. In \cite{Augustine2011}, the authors study the scenario in which the adversary can force a large number of nodes of its choice to leave the network while other nodes naturally join the network at the same time. These join and leave operations impact the topology. Yet  the size of the network is assumed to remain constant. The authors assume furthermore that the nodes are connected via an expander graph. Depending on whether the adversary has to decide in advance the identities of the nodes to be kicked-out of the network, the authors propose almost-everywhere agreement protocols tolerating at each time step a churn of, respectively $O(n)$ and $O(\sqrt n)$. 
The two main differences with our work are that (1)~all nodes are assumed to be honest (i.e., the adversary is only external) and (2)~nodes are connected via an expander graph by assumption. In contrast, our protocol tolerates a Byzantine adversary controlling a constant fraction of the nodes of the network and dynamically maintains the expander graph.

Some protocols have been proposed to maintain P2P overlay networks. Some offer efficient routing strategies and tolerate  crashes, e.g. CAN, Pastry or Tapestry \cite{ratnasamy2001scalable,rowstron2001pastry,zhao2004tapestry}. Some are dedicated to asynchronous networks with concurrent joins and leaves \cite{li2004active}. However, none guarantees both that each node has a low degree and that the resulting overlay exhibits  good expansion properties in the sense we require here.
Protocols such as SHELL \cite{scheideler2009distributed}  organize  peers into a heap structure resilient to large Sybil attacks, while the overlay presented in \cite{kuhn2010towards} is resilient to an adversary that can force several peers to crash and join in a arbitrary manner.   In \cite{kuhn2010towards}, the number of join and leave operations tolerated at each turn is proportional to the degree of the nodes, which is optimal. However, the communication cost for maintaining the overlay is high as all the nodes of the network exchange messages at each step.

Other protocols considered unstructured overlays.
The protocol of \cite{law2003} builds an overlay corresponding to an expander graph obtained from the union of several random cycles. This protocol has been further extended and analyzed in \cite{Aspnes08,amin2004}.  Maintaining unstructured overlays induces fewer message exchanges compared to structured overlays \cite{kuhn2010towards,ratnasamy2001scalable,rowstron2001pastry,zhao2004tapestry} since only a polylogarithmic number of nodes are involved in the communication upon a join or a leave operation.  
Some of the previous constructions \cite{Aspnes08,amin2004} and \cite{law2003} can be complemented by a recent protocol from Pandurangan and Trehan \cite{Pandurangan2011Xheal} which preserves the expansion properties of a graph upon adversarial node removals. Nevertheless, the healing procedure proposed does not ensure an absolute expansion factor as we do.

The closest to ours, from the model perspective (dynamic network), is the one developed by Awerbuch and Scheideler \cite{awerbuch2004group,scheideler,Awerbuch2009,Scheideler2005}. They consider a synchronous network in which an adversary can force nodes to join and leave at each time step, with the constraint that the number of nodes in the network is always within a constant factor of the initial size. Their protocols further require that initially the network is exclusively composed of honest nodes and that the Byzantine ones join the network only after a particular initialization phase has taken place. Within this model, the authors propose a technique to maintain clusters of size $O(\log n)$ composed of a majority of honest ones. Our approach improves upon these previous works in several ways as we do not assume that initially the network is exclusively composed of honest nodes, we describe more precisely how to distributively perform all the operations, and, more importantly, we maintain a partition of the nodes when the size of the network varies polynomially. 

\section{Concluding Remarks}\label{conclusion}

This paper answers positively the following question raised in
\cite{King2010}:
``Can we [..] address problems of robustness in networks subject to
churn? An idea is to assume that:
1) the number of processors fluctuates between $n$ and $\sqrt n$
where $n$ is the size of name space; 2) the processors do not know
explicitly who is
in the system at any time; and 3) that the number of bad processors in
the system is always less
than a 1/3 fraction. In such a model, can we 1) do Byzantine agreement;
and 2) maintain small
(i.e. polylogorathimic size) quorums of mostly good processors?''

Our clustering protocol  can be leveraged to implement efficient and robust
algorithms for various problems such as broadcast, agreement,
aggregation, and sampling  in the context of highly dynamic networks. A
broadcast algorithm  using our technique would have for instance $\tilde
O (n)$ message complexity as compared to $O(n^2)$ without the
clustering. Similarly, a sampling algorithm relying on our protocol
would have a $\mathrm{polylog}(n)$ message complexity per sample. (We
discuss these and other applications
in the appendix).

We currently seek schemes to 
alleviate the need of the assumption of synchronous nodes. 
Another objective is to devise a procedure for the initialization phase of NOW whose communication cost is $o(n_{t_0}^2)$ (as opposed to $O(n_{t_0}^3)$ currently).
\newpage

\bibliographystyle{abbrv}
\bibliography{Long}

\begin{thebibliography}{10}

\bibitem{BookMarkovProcesses}
D.~Aldous and J.~Fill.
\newblock Reversible markov chains and random walks on graphs.
\newblock http://stat-www.berkeley.edu/users/aldous/RWG/book.html, 2002.

\bibitem{AspnesW05}
J.~Aspnes and U.~Wieder.
\newblock The expansion and mixing time of skip graphs with applications.
\newblock {\em ACM Symposium on Parallelism in Algorithms and Architectures
  (SPAA'05)}, pages 126--134, 2005.

\bibitem{Aspnes08}
J.~Aspnes and Y.~Yin.
\newblock Distributed algorithms for maintaining dynamic expander graphs.
\newblock Citeseer, 2008.

\bibitem{Augustine2011}
J.~Augustine, G.~Pandurangan, P.~Robinson, and E.~Upfal.
\newblock Towards robust and efficient computation in dynamic peer-to-peer
  networks.
\newblock {\em Arxiv preprint}, arXiv:1108.0809, 2011.

\bibitem{awerbuch2004group}
B.~Awerbuch and C.~Scheideler.
\newblock Group spreading: A protocol for provably secure distributed name
  service.
\newblock {\em Automata, Languages and Programming}, pages 187--210, 2004.

\bibitem{RNG}
B.~Awerbuch and C.~Scheideler.
\newblock Robust random number generation for peer-to-peer systems.
\newblock {\em Principles of Distributed Systems}, pages 275--289, 2006.

\bibitem{scheideler}
B.~Awerbuch and C.~Scheideler.
\newblock {Towards scalable and robust overlay networks}.
\newblock In {\em Proceedings of the International Workshop on Peer-To-Peer
  Systems (IPTPS'07)}, 2007.

\bibitem{Awerbuch2009}
B.~Awerbuch and C.~Scheideler.
\newblock {Towards a scalable and robust DHT}.
\newblock {\em Theory of Computing Systems}, 45(2):234--260, 2009.

\bibitem{BaldoniBN11}
R.~Baldoni, S.~Bonomi, and A.~S. Nezhad.
\newblock An algorithm for implementing bft registers in distributed systems
  with bounded churn.
\newblock In {\em SSS}, pages 32--46, 2011.

\bibitem{BaldoniBR09}
R.~Baldoni, S.~Bonomi, and M.~Raynal.
\newblock An implementation in a churn prone environment.
\newblock In {\em SIROCCO}, pages 15--29, 2009.

\bibitem{BaldoniPQS10}
R.~Baldoni, M.~Platania, L.~Querzoni, and S.~Scipioni.
\newblock Practical uniform peer sampling under churn.
\newblock In {\em ISPDC}, 2010.

\bibitem{Baumann2009}
H.~Baumann, P.~Crescenzi, and P.~Fraigniaud.
\newblock Parsimonious flooding in dynamic graphs.
\newblock In {\em Proceedings of the 28th symposium on Principles of
  distributed computing (PODC'09)}, pages 260--269, 2009.

\bibitem{BenOr1994}
M.~Ben-Or, B.~Kelmer, and T.~Rabin.
\newblock Asynchronous secure computations with optimal resilience (extended
  abstract).
\newblock In {\em Proceedings of the 30th annual symposium on Principles of
  distributed computing {(PODC'94)}}, pages 183--192, 1994.

\bibitem{Bollobas98}
B.~Bollobas.
\newblock {\em {Modern graph theory}}.
\newblock Springer, 1998.

\bibitem{Galil198781}
Z.~Galil and M.~Yung.
\newblock Partitioned encryption and achieving simultaneity by partitioning.
\newblock {\em Information Processing Letters}, 26(2):81 -- 88, 1987.

\bibitem{SRDS12}
S.~Gambs, R.~Guerraoui, H.~Harkous, F.~Huc, and A.-M. Kermarrec.
\newblock Scalable and secure polling in dynamic distributed networks.
\newblock {\em SRDS}, 2012.

\bibitem{DC07}
A.~Ganesh, A.-M. Kermarrec, E.~Le~Merrer, and L.~Massouli\'e.
\newblock Peer counting and sampling in overlay networks based on random walks.
\newblock {\em Distributed Computing}, 20:267--278, 2007.

\bibitem{17}
A.~Ganesh, L.~Massouli{\'e}, and D.~Towsley.
\newblock The effect of network topology on the spread of epidemics.
\newblock In {\em Proceedings of the 24th Annual Joint Conference of the IEEE
  Computer and Communications Societies (INFOCOM'2005)}, volume~2, pages
  1455--1466, 2005.

\bibitem{garay2011adaptively}
J.~Garay, J.~Katz, R.~Kumaresan, and H.~Zhou.
\newblock Adaptively secure broadcast, revisited.
\newblock In {\em Proceedings of the 30th annual Symposium on Principles of
  Distributed Computing (PODC'11)}, 2011.

\bibitem{giurgiu2010computing}
A.~Giurgiu, R.~Guerraoui, K.~Huguenin, and A.~Kermarrec.
\newblock Computing in social networks.
\newblock {\em Stabilization, Safety, and Security of Distributed Systems},
  pages 332--346, 2010.

\bibitem{amin2004}
C.~Gkantsidis, M.~Mihail, and A.~Saberi.
\newblock Random walks in peer-to-peer networks.
\newblock In {\em Proceedings of the 23rd Annual Joint Conference of the IEEE
  Computer and Communications Societies (INFOCOM'2004)}, volume~1, 2004.

\bibitem{ProbabilisticMethod}
M.~Habib, C.~McDiarmid, J.~Ramirez-Alfonsin, and B.~Reed.
\newblock {\em {Probabilistic Methods for Algorithmic Discrete Mathematics}}.
\newblock Springer Verlag, Berlin, 1998.

\bibitem{Hirt2010}
M.~Hirt and V.~Zikas.
\newblock {Adaptively secure broadcast}.
\newblock In {\em Proceedings of Advances in Cryptology (EUROCRYPT'10)}, pages
  466--485, 2010.

\bibitem{jelasity2004peer}
M.~Jelasity, R.~Guerraoui, A.~Kermarrec, and M.~Van~Steen.
\newblock The peer sampling service: Experimental evaluation of unstructured
  gossip-based implementations.
\newblock In {\em Proceedings of the 5th international conference on
  Middleware}, pages 79--98, 2004.

\bibitem{King2011}
V.~King, S.~Lonargan, J.~Saia, and A.~Trehan.
\newblock Load balanced scalable byzantine agreement through quorum building,
  with full information.
\newblock {\em Distributed Computing and Networking}, pages 203--214, 2011.

\bibitem{King2010}
V.~King and J.~Saia.
\newblock {Scalable Byzantine Computation}.
\newblock {\em ACM SIGACT News}, 41(3):89--104, 2010.

\bibitem{krebs2011expander}
M.~Krebs and A.~Shaheen.
\newblock {\em Expander Families and Cayley Graphs: A Beginner's Guide}.
\newblock Oxford University Press, USA, 2011.

\bibitem{kuhn2010distributed}
F.~Kuhn, N.~Lynch, and R.~Oshman.
\newblock Distributed computation in dynamic networks.
\newblock In {\em Proceedings of the 42nd symposium on Theory of computing
  (STOC'10)}, pages 513--522, 2010.

\bibitem{kuhn2011}
F.~Kuhn, Y.~Moses, and R.~Oshman.
\newblock Coordinated consensus in dynamic networks.
\newblock In {\em Proceedings of the 30th annual symposium on Principles of
  distributed computing (PODC'11)}, pages 1--10, 2011.

\bibitem{kuhn2010towards}
F.~Kuhn, S.~Schmid, and R.~Wattenhofer.
\newblock Towards worst-case churn resistant peer-to-peer systems.
\newblock {\em Distributed Computing}, 22(4):249--267, 2010.

\bibitem{Katzir11}
K.~L., E.~Liberty, and O.~Somekh.
\newblock Estimating sizes of social networks via biased sampling.
\newblock In {\em Proceedings of the 20th international conference on World
  wide web}, pages 597--606. ACM, 2011.

\bibitem{Lamport1982}
L.~Lamport, R.~Shostak, and M.~Pease.
\newblock {The Byzantine Generals Problem}.
\newblock {\em ACM Transactions on Programming Languages and Systems
  (TOPLAS'82)}, 4(3):382--401, July 1982.

\bibitem{law2003}
C.~Law and K.~Siu.
\newblock Distributed construction of random expander graphs.
\newblock In {\em Proceedings of 22nd Annual Joint Conference of the IEEE
  Computer and Communications Societies (INFOCOM'2003)}, pages 2133--2143,
  2003.

\bibitem{li2004active}
X.~Li, J.~Misra, and C.~Plaxton.
\newblock Active and concurrent topology maintenance.
\newblock {\em Distributed Computing}, pages 320--334, 2004.

\bibitem{Lindvall2002}
T.~Lindvall.
\newblock {\em {Lectures on the coupling method}}.
\newblock Dover, 2002.

\bibitem{MoskSeparable}
D.~Mosk-Aoyama and D.~Shah.
\newblock Computing separable functions via gossip.
\newblock In {\em Proceedings of the twenty-fifth annual symposium on
  Principles of distributed computing (PODC'06)}, pages 113--122, 2006.

\bibitem{Pandurangan2011Xheal}
G.~Pandurangan and A.~Trehan.
\newblock Xheal: localized self-healing using expanders.
\newblock In {\em Proceedings of the 30th annual symposium on Principles of
  distributed computing (PODC'11)}, pages 301--310, 2011.

\bibitem{pease1980reaching}
M.~Pease, R.~Shostak, and L.~Lamport.
\newblock Reaching agreement in the presence of faults.
\newblock {\em Journal of the ACM}, 27(2):228--234, 1980.

\bibitem{ratnasamy2001scalable}
S.~Ratnasamy, P.~Francis, M.~Handley, R.~Karp, and S.~Shenker.
\newblock A scalable content-addressable network.
\newblock In {\em Conference on Applications, technologies, architectures, and
  protocols for computer communications}, pages 161--172, 2001.

\bibitem{rowstron2001pastry}
A.~Rowstron and P.~Druschel.
\newblock Pastry: Scalable, decentralized object location, and routing for
  large-scale peer-to-peer systems.
\newblock In {\em Proceedings of Middleware'01}, pages 329--350, 2001.

\bibitem{Scheideler2005}
C.~Scheideler.
\newblock {How to spread adversarial nodes? Rotate!}
\newblock {\em Proceedings of the 37th annual symposium on Theory of computing
  (STOC'05)}, page 704, 2005.

\bibitem{scheideler2009distributed}
C.~Scheideler and S.~Schmid.
\newblock A distributed and oblivious heap.
\newblock {\em Automata, Languages and Programming}, pages 571--582, 2009.

\bibitem{zhao2004tapestry}
B.~Zhao, L.~Huang, J.~Stribling, S.~Rhea, A.~Joseph, and J.~Kubiatowicz.
\newblock Tapestry: A resilient global-scale overlay for service deployment.
\newblock {\em IEEE Journal on Selected Areas in Communications}, 22(1):41--53,
  2004.

\end{thebibliography}

\appendix

\section*{Appendix}

\section{Continuous Time Random Walk} \label{randomWalk}

We briefly review fundamental results on continuous time random walk (CTRW)~\cite{BookMarkovProcesses}, which   we use as a building block for both our protocols OVER and NOW.
 
Given an undirected graph $G=(V,E)$, a CTRW is described by the following stochastic process: a virtual agent walks from node to node through edges chosen uniformly at random from the ones incident to the node on which the agent currently is. The walk is scheduled for a given amount of time $T$, and when the agent visits a given node $v$, it decrements a counter, representing the remaining time of the random walk,  by $\log(1/U)/d_{v}$, where $U$ is a number chosen uniformly at random from $(0,1)$. As long as the value of the counter is positive, the agent chooses at random a new neighbor and walks to this node. Otherwise, the walk stops. We denote by $\psi_t(u)$ the probability vector of the position of the agent at time $t$ for $u \in V$ the starting node of the CTRW. 

This type of CTRW has a uniform stationary distribution $\pi=(1/n)_i$ \cite{BookMarkovProcesses}, and its speed of convergence is characterized by the mixing time. More precisely, for every $\epsilon >0$, the $\epsilon$-mixing time of a CTRW is $T_{mix}(\epsilon)=\max_{u \in V}\min\{t|d(\psi_{t'}(u),\pi)\leq \epsilon, \forall t'>t\}$, where $d(.,.)$ is a distance function such as the maximum absolute difference of coordinates between two vectors.

The interpretation of \cite{Lindvall2002} (Theorem 5.2) states that $1/\epsilon$ represents the expected number of samples needed before retrieving an improperly selected node compared to the stationary distribution $\pi$. As we rely on random walks to generate samples, which is done a number of times polynomial in $n$, we use $\epsilon = \Theta(1/n^{c})$ for a chosen constant $c$. The choice of this value for $\epsilon$ means that whp all our samples can be considered as being picked uniformly at random. $T_{mix}(\epsilon)$ can be upper bounded by using $\lambda_2$ ($d(\psi_t(u),\pi)\leq \frac{\sqrt{n}}{2}e^{-\lambda_2t}$, \cite{DC07}), the second eigenvalue of the Laplacian matrix of $G$, which itself can be lower bounded using $I(G)$ (the isoperimetric constant of $G$ defined as $I(G)= \inf_{S:|S|\leq n/2}E(S,\bar S)/|S|$ where $E(S,\bar S)$ is the number of edges between $S$ and $\bar S= V\setminus S$), and $\Delta$, its maximum degree ($\lambda_2 \geq I(G)^2/2\Delta(G)$,  \cite{krebs2011expander}). For graphs from the Erd\"os-R\'enyi ${\cal{G}}(n,p)$ model \cite{Bollobas98}, for an arbitrarily small positive constant $\alpha$, $p = \log(n)^{1+\alpha}/n$ and $d=np$, whp $\lambda_2  \geq d^2/8\Delta(G)$ (Theorem~5.4 of \cite{17}).

If the nodes of the graph are weighted by a weight function $w: V\rightarrow (min,max)$, to bias the CTRW towards those with a bigger weight, once the CTRW is over, finishing at node $v$, a random value between 0 and 1 is chosen. If it is smaller than $w(v)/max$, $v$ is returned, otherwise a new CTRW is started. When $min/max$ is constant, the expected number of computed CTRW is constant and of $O(\log n)$ whp. The stationary distribution becomes $\left(\frac{w(v)}{\sum w(v)}\right)$.

\section{OVER: Expander Graphs}\label{expander}

OVER (for Over-Valued Erd\"os-R\'enyi graph) maintains an overlay modeled by a graph whose vertices are the clusters forming the partition of the nodes of the network. The term \textit{vertex} is used in the context of the overlay and  \textit{node} is used in the context of the network. Since each cluster  contains more than two thirds of honest nodes whp, we can assume in this section that all the vertices of the overlay are honest.  We  use the notation $\G^R$ where the $\hat\ $ relates to the fact that we consider an overlay, and the $\ ^R$ that it is an instance of a random graph. The evolution of the graph is represented by a sequence $\G^R_{t_0},\ldots, \G^R_{t_i}, \ldots$. $n_{t_i}$ denotes the number of vertices of $\G^R_{t_i}$. 
OVER relies mainly on four subroutines: {\CTRW}, {\join}, {\leave} and {\link} (detailed below). These are devised in a such a way that, starting from a random graph drawn from the Erd\"os-R\'enyi model, whp, at any time during a sequence of vertex additions and removals polynomial in $N$, the resulting graph has a large isoperimetric constant and a low degree (Theorems \ref{thmI(G)} and \ref{thmMaxDegree}). Furthermore, as we will  explain, this graph is robust against random vertices removal performed without calling the subroutine {\leave}\footnote{When a vertex of the overlay is removed without calling {\leave}, this means that the number of active nodes of the corresponding cluster $C$ is not sufficient to send valid messages, i.e. that more than half of the nodes of $C$ have crashed, and hence no valid message can be send by more than half of the nodes originally in $C$.} at the end of Section~\ref{AnalysisOVER}. 

\subsection{OVER: basic primitives}
\begin{itemize}[noitemsep,nolistsep]
\item ${\CTRW}(v)$ returns a vertex chosen by a CTRW (Continuous Time Random Walk, \emph{cf.} Appendix \ref{randomWalk}) from vertex $v$. The communication cost and round complexity of this subroutine are twice the length of the path performed by the CTRW, which is $O(\log^{2} N)$ as proven in the next subsection.
\item ${\link}(u,v)$ adds an extra edge between the vertices $u$ and $v$. The communication cost and round complexity of this subroutine are the length of the path used to communicate between $u$ and $v$.
\item ${\join}(v)$ is executed by a vertex $v$ contacted by a vertex $u$ upon joining the network. $2\log^{1+\alpha} N$ edges are added at random to connect $u$ to the rest of the graph using ${\link}(u,\CTRW(v))$.  ($\alpha$ is chosen positive constant that can be arbitrarily close to 0.)
\item ${\leave}(v)$ is executed by a vertex $v$ leaving the network without crashing. The edges connected to $v$ are removed and $2\log^{1+\alpha} N$ new edges are added at random using ${\link}({\CTRW}(v),{\CTRW}(v))$\footnote{Upon removal, adding edges is fundamental as otherwise the number of remaining edges in the graph may not be sufficient to guarantee connectivity after a number of vertex removals that is polynomial in $N$.}.
\end{itemize}
\begin{algorithm}[htb]
\caption{Continuous Time Random Walk: $\CTRW(v)$.}
\label{algo_id_ctrw}
\footnotesize
\begin{algorithmic}
\REQUIRE A connected graph $G=(V,E)$ whose Laplacian second eigenvalue is $\lambda_2$ and a starting vertex $v$.
\ENSURE The returned vertex is chosen uniformly at random.
\STATE $v$ sets $T=\log^2 n/\lambda_2$.
\STATE $v$ chooses at random a neighbor $u$ and moves to $current\_node=u$.
\WHILE{$T>0$} 
\STATE $current\_node$ chooses at random a number $U \in (0,1)$.
\STATE $current\_node$ updates $T=T-\log(1/U)/d_{current}$.
\STATE $current\_node$ chooses at random a neighbor $u$ and moves to $current\_node=u$.
\ENDWHILE
\STATE Return $current$ to the original vertex $v$ by following in a backward manner the path constructed by the CTRW.
\end{algorithmic}
\end{algorithm}
\begin{algorithm}[htb]
\caption{Adding a new edge: $\link(u,v)$.}
\label{algo_id_link}
\footnotesize
\begin{algorithmic}
\REQUIRE A connected graph $G=(V,E)$ and two vertices $u$ and $v$.
\ENSURE The addition of an edge between $u$ and $v$.
\STATE $u$ adds $v$ to its list of neighbors.
\STATE $v$ adds $u$ to its list of neighbors.
\end{algorithmic}
\end{algorithm}
\begin{algorithm}[htb]
\caption{Adding a vertex: $\join(v)$.}
\label{algo_id_add}
\footnotesize
\begin{algorithmic}
\REQUIRE A connected graph $G=(V,E)$, a new vertex $u$ that contacts a vertex $v$ already present in the graph.
\ENSURE The addition of $2\log^{1+\alpha} n$ edges at random that connect $u$ to the rest of the graph.
\FOR{$i=0; i=i+1; i<2\log^{1+\alpha} n$}
\STATE $v$ executes ${\link}(u,\CTRW(v))$.
\ENDFOR
\end{algorithmic}
\end{algorithm}
\begin{algorithm}[htb]
\caption{Removing a vertex: $\leave(v)$.}
\label{algo_id_remove}
\footnotesize
\begin{algorithmic}
\REQUIRE A connected graph $G=(V,E)$ and a vertex $v$ that has left $G$ in a proper manner (i.e, without crashing).
\ENSURE The addition of $2\log^{1+\alpha} n$ edges at random.
\FOR{$i=0; i=i+1; i<2\log^{1+\alpha} n$}
\STATE $v$ executes ${\link}(\CTRW(v),\CTRW(v))$.
\ENDFOR
\end{algorithmic}
\end{algorithm}

Note that the complexities given in this section will be increased when used in NOW (later), as each vertex of the overlay corresponds to several nodes, and a message from a vertex $C_1$ to another vertex $C_2$ induces $|C_1| \times |C_2|$ messages between nodes.

\subsection{Analysis of the OVER graph}
\label{AnalysisOVER}
We show here that at each time step, the graph constructed by OVER exhibits good expansion properties and a small maximum degree. These results are proved under the assumption that the random choices made during the construction of $\G^R$ are perfectly uniform (i.e, the small bias induced by the random walk is ignored). This assumption is justified by the fact that we consider a mixing time after which the distance from the distribution of the sample to the uniform distribution is $O(n^{-c})$ for some arbitrarily large constant $c$. 


We further demonstrate the results for a single addition or removal of vertex at once, but it can be easily extended to a higher number of additions and removals that could be performed in parallel (Section \ref{Weakening}).
\setcounter{theorem}{0}
\begin{theorem}[Isoperimetric constant of $\G^R$]\label{thmI(G)} 
Whp, at any time $t$ after a number of time steps polynomial in $n$, $\G^R_{t}$ has an isoperimetric constant $I(\G^R_{t}) \geq \log^{1+\alpha} N/2$.\end{theorem}
\begin{proof}
To prove this theorem, we demonstrate that at each time step $t_i$, $\G^R_{t_i}$ can be seen as an instance of a graph from the Erd\"os-R\'enyi model ${\cal G}(n_{t_i},p(n_{t_i}))$ with $p=\log(N)^{1+\alpha}/n_{t_i}$ to which some edges have been added.

If $p(n)$ is decreasing (i.e., $p(n+1)<p(n)$), a graph generated from the model ${\cal G}(n+1,p(n+1))$ can be considered as a sub-graph of a graph of ${\cal G}(n,p(n))$ to which a new vertex $v$ has been added and such that each new potential edge is created with probability $p(n+1)$. By drawing on this analogy, one can proceed as follow: first choose a degree $d_v$ for $v$ according to the binomial distribution $Bi(n+1,p(n+1))$, and then choose $d_v$ neighbors uniformly at random. We follow this procedure when we add a new vertex to $\G^R_{t_i}$ (join operation), with $p(n_{t_i})=(\log^{1+\alpha} N)/n_{t_i}$. The added vertex has a degree equals to $2\log^{1+\alpha} N$, which whp leads to a larger degree than $Bi(n_{t_i},\log^{1+\alpha} N/n_{t_i})$.

Similarly, when $p(n)=(\log^{1+\alpha} N)/n$, a graph issued from the model ${\cal G}(n,p(n))$ can be seen as a sub-graph of a graph of ${\cal G}(n,p(n+1))$ to which less than $2\log^{1+\alpha} N$ edges have been added at random. We follow this procedure when a vertex of $\G^R_{t_i}$ is removed (leave operation). Therefore, $\G^R_{t_i}$ can be seen as an instance of ${\cal G}(n_{t_i},p(n_{t_i}))$ to which some edges have been added. From \cite{17} and as $p(n_{t_i})n_{t_i} >> \log(n_{t_i})$, we 
have $I(\G^R_{t_i})\geq p(n_{t_i})n_{t_i}/2=(\log^{1+\alpha} N)/2$.
\end{proof}

To illustrate the previous lemma, consider the following case: starting from a Erd\"os-R\'enyi random graph with $\sqrt N$ nodes, we add nodes until we reaches $N$ nodes. The first nodes are connected by edges, each present with probability $O(\log^{1+\alpha} N /\sqrt N)$. When nodes are added, the probability of presence of the new edges will decrease, and it will be of $O(\log^{1+\alpha} N / N)$ for the last one. Therefore, we can see that the graph has a bigger density of edges among the initial nodes. This is why the obtained graph is not an instance of ${\cal G}(N,p(N))$, but rather such an instance with extra edges in between the oldest nodes. 

\begin{theorem}[Maximum degree of $\G^R$] \label{thmMaxDegree}
Whp, at any time $t$ after a number of time steps polynomial in $n$, $\G^R_{t}$ has maximal degree at most $c\log^{1+\alpha} N$ for some sufficiently large constant $c$. 
\end{theorem}

\begin{proof}
Given a sequence of graphs of the form $\G^R_{t_i}$, we want to compute the sequence of degrees of a specific vertex $v$. Let $t_{join}$ be the time at which $v$ joins the network. If $t_{join}=t_0$, then $v$ is in $\G^R_{0}$ the initial graph. Otherwise if $t_{join}>t_0$ then $v$ belongs to $\G^R_{t_{join}}$ but not to $\G^R_{t_{join-1}}$. We now focus on a sequence during which $v$ does not leave the network. During this sequence, the addition and removal of vertices have the following impact on the degree of $v$, for $n_{t_i}$ the number of vertices in $\G^R_{t_i}$ before the action performed at step ${t_{i+1}}$ is executed:
\begin{itemize}
\item When a vertex is added, $n_{{t_i}+1}=n_{t_i}+1$, and the degree of $v$ increases by one with probability $(2\log^{1+\alpha} N)/n_{t_i}$.
\item When a vertex is removed, $n_{{t_i}+1}=n_{t_i}-1$, and the degree of $v$ decreases by one with probability $d_t(v)/(n_{t_i}-1)$\footnote{the vertex which is removed is chosen among $(n_{t_i}-1)$ vertices as we work conditionally on the fact that $v$ is not removed.} as the vertex removed at random may be connected to $v$. Afterwards, as $2\log^{1+\alpha} N$ edges are added at random, the degree of $v$ increases by at most a value corresponding to the hyper-geometric distribution as $2\log^{1+\alpha} N$ trials are performed to 
select $n_{t_i}-2$ edges 
among $n_{t_i}-1 \choose 2$ possible edges in total. 
\end{itemize}

By assumption, the number of vertices in $\G^R_{t_i}$ at a particular time step $t_i$ verifies $\sqrt N \leq n_{t_i} \leq  N$. If we consider a sequence of addition and removal of vertices starting from $n_{t_0}=\sqrt N$, \footnote{Starting from a larger size would result only in adding less edges, therefore the degree would have less probability to go over $c\log^{1+\alpha} N$.} then there are at most $ N-\sqrt N$ more addition than removal operations. Moreover, each removal occurring at time $t_j$ can be associated to an addition that has occurred at time $t_i<t_j$ such that $n_{t_i}=n_{t_j}$.

Considering the event $\{d_t(v) \geq c\log^{1+\alpha} N\}$, we want to prove that its probability of occurrence is very low. Such an event would be preceded by another event:

 $\{d_{t'}(v) \geq \frac{c}{2}\log^{1+\alpha} N\}$ such that from $t'$ to $t$ the degree of $v$ remains higher than $\frac{c}{2}\log^{1+\alpha} N$. The probability that such an event occurs can be upper bounded by the probability of the following random variable being larger than $c\log^{1+\alpha} N$. 

For all time steps $t_i$, we have $\sqrt N \leq n_{t_i} \leq  N$ and we define $X=\sum_{j=\sqrt N}^N W_i + \sum_{i=0}^T (X_{i}+Y_{i} + Z_{i})$ in which: 
\begin{itemize}
\item $W_j=+1$ with probability $\log^{1+\alpha} N/j$.
\item $X_i=+1$ with probability $\log^{1+\alpha} N/n_{t_i}$.
\item $Y_i=-1$ with probability $\frac{c}{2}\log^{1+\alpha} N/n_{t_i}$ (as $d(v) \geq \frac{c}{2}\log^{1+\alpha} N$, we can lower bound the probability that it decreases).
\item $Z_i$ follows the hyper-geometric distribution corresponding to $2\log^{1+\alpha} N$ trials to select $n_{t_i}-2$ elements among $n_{t_i} -1 \choose 2$. 
\end{itemize}

Using standard Chernoff bounds, we obtain that in order to reach $\sum_{i=0}^T X_i \geq c\log^{1+\alpha} N$, $T$ needs to be sufficiently large with respect to $N$ so that $\sum 1/n_{t_i}$ is large enough. From this, it is possible to infer that the probability of the event $\{X>c\log^{1+\alpha} N\}$ is $o(n^{-c'})$ small in $N$ for an arbitrarily large $c'$ (depending on $c$). Therefore, the maximum degree of $\G^R_{t_i}$ is upper bounded whp by $c\log^{1+\alpha} N$ at each time step during a polynomial number of join and leave operations.
\end{proof}


The fact that more than half of the nodes of a cluster simultaneously leave the network (voluntarily or not)   corresponds to a crash of the corresponding vertex in the overlay. Effectively,  the cluster can then  no longer  send valid messages as a message  needs to be sent by at least half of the nodes of the cluster to be valid. The graph $\G^R_{t}$ obtained with OVER is robust against $\epsilon n_{t}$ random such crashes of vertices: the properties of Theorems~\ref{thmI(G)} and \ref{thmMaxDegree} will still be ensured whp. This resilience can be deduced from the proof of Theorem~\ref{thmI(G)} (which appears in the Appendix). 
(In comparison, a graph obtained using techniques presented in \cite{Aspnes08,amin2004,law2003} is composed of an union of cycles. 
Therefore if a vertex crashes, the cycles are cut and the protocol no longer works.)

\section{NOW: Initialization phase}

\subsection{Distributed random number generation} 
\label{randnum}
{\randNum} enables the nodes of a cluster to agree on a common integer chosen uniformly at random from the interval $(0,r)$ (the value of $r$ is determined by the protocol using \randNum. Here, $r$ represents a knowledge common to all the nodes involved.). In short, the idea is that each node generates a value uniformly at random from the interval $(0;r)$, and forwards it to the other nodes. The sum of the chosen values modulo $r$ gives the output. An unbiased random number can be securely computed with the protocol of \cite{BenOr1994} when private channels are available. Alternatively, one can use the technique proposed in \cite{RNG} which does not require private channels but tolerates only a sixth of Byzantine nodes and has little bias. The communication cost of these algorithms are $\mathrm{polylog}(N)$ and $O(\log^2 N)$ (we use  $O(\log^2 N)$ in the following, this allows us to keep track of the precise complexity rather than to have a $\mathrm{polylog}(N)$ one, but it applies only to $\tau \leq 1/6-\epsilon$). The round complexity are $\mathrm{polylog}(N)$ and $O(\log N)$.


\begin{algorithm}[ht]
\caption{Distributed random number generation: \randNum.}
\label{algo_id_randNum}
\footnotesize
\begin{algorithmic}
\REQUIRE A cluster $C$ with a majority of honest nodes and an integer $r$.
\ENSURE The generation of an integer chosen at random from the interval $(0;r)$.
\STATE Each node of $C$ chooses an integer chosen uniformly at random from $(0;r)$ and encrypts it using its session public key.
\STATE The node compute the sum modulo $r$ of the chosen number using the protocol of \cite{BenOr1994}. 
\STATE Output the obtained value.
\end{algorithmic}
\end{algorithm}

\subsection{Random cluster choice}

\begin{algorithm}[ht]
\caption{Randomly choosing a cluster: \randCl.}
\label{algo_id_randCl}
\footnotesize
\begin{algorithmic}
\REQUIRE A graph connecting clusters each with a majority of honest nodes and an initial cluster $C$.
\ENSURE The choice of a cluster uniformly at random among all the clusters.
\STATE The nodes from $C$ choose an integer $i$ at random using {\randNum} with the integer $r=d_C$ and set $T=8\log n$.
\STATE The nodes from the current cluster $C$ initiate a CTRW by sending a message to all the nodes of the cluster $C'$ for $C$ the $i^{th}$ neighbor of cluster $C$ on the overlay. 
\STATE $C'$ becomes the current cluster.
\WHILE{$T>0$}
\STATE The nodes from the current cluster choose a number $U$ from $(0,1)$ using {\randNum}, and reduce $T$ by  $\log(1/U)/d$, for $d$ being the degree of the current cluster.
\STATE The nodes from the current cluster choose an integer $i$ from $(1,d)$ using {\randNum}.
\STATE The nodes from the current cluster $C$ send a message to all the nodes of the cluster $C'$, which is the $i^{th}$ neighbor on the overlay of cluster $C$. 
\STATE $C'$ becomes the current cluster.
\ENDWHILE
\STATE Output the identity of the current cluster in which the walks has ended.
\end{algorithmic}
\end{algorithm}

\subsection{Exchange of nodes}

The algorithm related to the exchange of nodes is described in Algorithm \ref{algo_id_exchange}.

\begin{algorithm}[htb]
\caption{Exchange of nodes: \exchange.}
\label{algo_id_exchange}
\footnotesize
\begin{algorithmic}
\REQUIRE A graph connecting clusters with a majority of honest nodes and a cluster $C$.
\ENSURE All the nodes of $C$ are exchanged with nodes chosen at random.
\FOR{nodes $x$ in $C$}
\STATE Choose a cluster $C_x$ using \randCl.
\STATE The nodes from $C_x$ choose an integer $i_x$ using {\randNum} with $r=|C_x|$, which corresponds to a node $y_x$ of $C_x$.
\ENDFOR
\FOR{nodes $x$ in $C$}
\STATE All the nodes from $C_x$ send a message to all the nodes of the neighboring clusters that $x$ replaces $y_x$.
\STATE  All the nodes from $C$ send a message to all the nodes of the neighboring clusters saying that $y_x$ replaces $x$.
\ENDFOR
\end{algorithmic}
\end{algorithm}

\subsection{Global knowledge computation}

The goal of this algorithm, used during the initialization phase of NOW is to ensure that all honest nodes of a cluster know the identities of all nodes.  This algorithm terminates after a number of communication rounds at most the diameter of the graph considering only the edges adjacent to at least one honest node. When the algorithm terminates, it is guaranteed that all honest nodes know the identities of all nodes in the network. Its communication cost is $O(n\times e)$ where $e=|E|$ (see Appendix for the theorem and details).

\begin{algorithm}[htb]
\caption{Global knowledge computation}
\label{algo_id_distribution}
\footnotesize
\begin{algorithmic}
\REQUIRE A graph $G=(V,E)$ in which honest nodes form a connected component and each Byzantine node is adjacent to an honest node. A node $v$ knows its neighbors $\Gamma_v = \{u : uv \in E\}$.
\ENSURE All honest nodes know $V$.
\STATE Each node $v$ do:
\STATE Set its request list to $\Gamma_v$, $V_v^0 = \Gamma_v$ and $i=1$.
\WHILE{$v$ request list is not empty} 
\STATE round $j$: $v$ asks to all $u$ in its request list for $\Gamma_u$. $v$ set its request list to empty.
\FOR{all requests received by $v$}
\STATE round $j+1$: $v$ sends $\Gamma_v$ to the node requesting it if it has not done so yet.
\ENDFOR
\STATE round $j+1$: $V_v^i$ = $V_v^{i-1}$
\FOR{all set of neighbors $\Gamma$ received by $v$}
\STATE round $j+1$: $V_v^i = V_v^i \cup \Gamma$.
\ENDFOR
\STATE round $j+1$: v set $i=i+1$ and its list to $V_v^i \setminus V_v^{i-1}$
\ENDWHILE
\STATE Set $V$ to $V_v^i$.
\end{algorithmic}
\end{algorithm}

\begin{theorem}[Global knowledge computation]
\label{theo_global_knowledge}
In a graph composed of $n$ vertices, Algorithm~\ref{algo_id_distribution} terminates after a number of communication rounds at most the diameter of the graph considering only the edges adjacent to at least one honest node. When the algorithm terminates, it is guaranteed that all honest nodes know the identities of all nodes in the network. Its communication cost is $O(n\times e)$ where $e=|E|$.
\end{theorem}
\begin{proof}
Each node sends its list of original neighbors to each other node exactly once, therefore its complexity is $\sum (n-1) \times d_v = O(n\times e)$. It further contact each node exactly one to ask for the set of neighbors. This does not change the asymptotic complexity.

We are especially interested in the graph $G'$ obtained when considering only the edges with at least one honest node. We denote by $d'(u,v)$ the distance between $u$ and $v$ in $G'$.

We prove the theorem by induction. The induction hypothesis is that at the end of the $i^{th}$ times $v$ went through the while loop, $V_v^i$ contains at least all the nodes at distance less than $i+1$ in $G'$.

Before $v$ enters the while loop, $V_v^0$ contains all the nodes at distance $1$ in $G'$. During the first execution of the while loop, $v$ receives at least all the neighbors of its honest neighbors, therefore the induction hypothesis is true for this first time as $V_v^0$ contains $\{u : d'(u,v) \leq 2\}$.

Now we consider an induction step. $V_v^{i-1}$ contains all the nodes at distance $i$ in $G'$. During the while loop, $v$ receives the neighbors (in $G$) of all honest nodes in  $V_v^{i-1}$ from which it has not heard yet. This ensures that it has received all the nodes at distance $i+1$ in $G'$, proving the induction step is valid.

The while loop terminates when the request list is empty. Such an event occurs when $v$ has $V_v^{i-1}=V_v^{i}$ which means that no nodes that $v$ contacted 
are 
nodes that were unknown to $v$. In other words, it means that all the neighbors of the nodes of $V_v^{i-1}$ are in $V_v^{i-1}$. But since the honest nodes form a connected component and that all the nodes controlled by the adversary are adjacent to at least one honest node, we conclude that $V_v^{i-1}=V$.

Finally, the induction hypothesis directly gives us that the algorithm stops after a number of communication rounds at most the diameter of $G'$.

This theorem requires that each node controlled by the adversary is adjacent to at least one honest node as otherwise, if a node $x$ does not satisfy this hypothesis, the adversary may decides that some honest node know about $x$ while some others do not. This may be problematic to apply the protocol presented in \cite{King2010}. On the other end, this hypothesis can be dropped if we use protocols that do not required each honest node to have the same view of the network.
\end{proof}

\subsection{Weakening the assumptions}\label{Weakening}
We discuss here how  to weaken some of the assumptions upon which NOW has been analyzed in order to increase the generality of our construction. In particular, we describe how to adapt NOW so that it can tolerate several parallel join and leave operations.

\paragraph{Occurrence of several join and leave operations at a particular time step}
In order to accommodate a high number of nodes joining and leaving at each time step, it is sufficient to consider clusters of larger size. For instance, to be able to cope with $\log^i N$ nodes joining or leaving the network at each time step for some constant $i\geq 0$, we can use clusters of size $\log^{i+1} N$ nodes instead of $\log N$. As a consequence, at each time step, the number of nodes susceptible to leave a cluster is negligible compared to its size. Hence, the adversary cannot control a majority of the nodes of a cluster by forcing all the nodes from this cluster to leave at the same time step. All the proofs that we have developed can be adapted to cope with this new cluster size. Moreover, with respect to the operations used by   OVER and NOW, they can all be performed in parallel and therefore no further adaptation is required.
Finally, the new construction does not impact the round complexity and increases only the communication cost by a polylogarithmic factor.

\paragraph{Multiple crashes}\label{multipleCrashes}
Our protocol tolerates an adversary that can make $\eta n$ random nodes crash simultaneously if we suppose that it controls at most a $1/3-\epsilon$ fraction of the nodes and that $\eta$ is small enough when compared to $\epsilon$. This type of crashes can be used to simulate a failure of some critical links in a network. 
The assumption that the crashes are random is necessary as, otherwise, the adversary could split the honest nodes into disconnected components and gain the lead in some clusters, which will make NOW (and any other protocol) fail.

If we assume that the adversary controls at most a $1/6-\epsilon$ fraction of the nodes, then regardless of the number of simultaneous honest nodes crashing, the adversary will not control more than a third of the nodes in a cluster whose remaining size is greater than half of its size before the crash.

\section{Applications}


The overlay of clusters we obtain (using NOW and OVER) can be applied to solve a wide range of  problems in distributed computing. We review some of these applications in this section. The layering of the corresponding  algorithms is depicted in Figure \ref{stack}. The complexities obtained by the different applications are summarised later. 
Figure \ref{stack} describes the interactions between the different algorithms. 


\begin{center}\label{stack}
\begin{figure}[tb!]
  \begin{center}
\scalebox{0.4}{\includegraphics{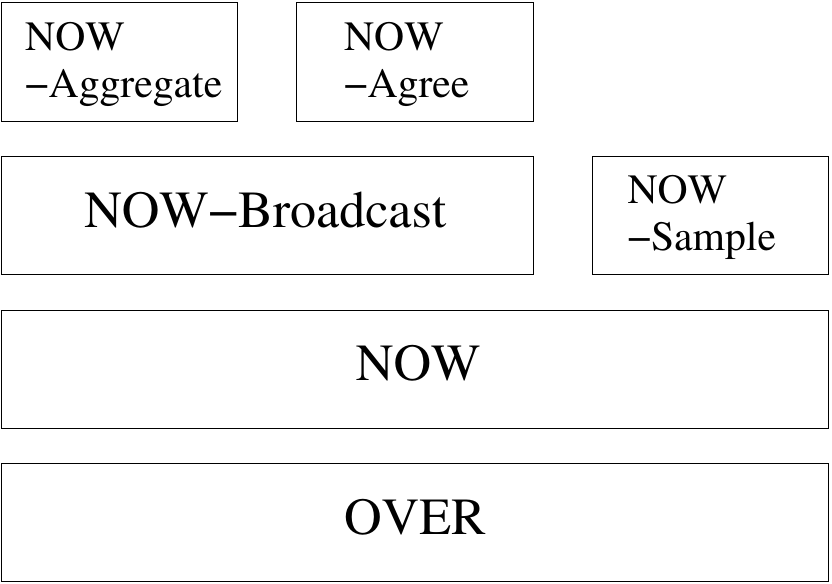}}
\caption{Overview of the algorithms stack.}
   \end{center}
\end{figure}
\end{center}

\subsection{NOW-Broadcast} 
\emph{Broadcast}, first introduced in \cite{pease1980reaching}, is one of the fundamental primitives in distributed computing. In a nutshell,  the sender, a specific node in the network, aims at sending a message to all nodes in the network such that (a) either they all   receive the same message,
or (b) the broadcast is aborted and none receive it 
(this could happen for instance if the sender crashes or tries to send different messages to different nodes). 

Broadcast is particularly challenging in a system in which the adversary controls some of the nodes.  Current broadcast algorithms that tolerate an adversary controlling a constant fraction of the nodes have a communication cost that is at least quadratic in the size of the network (compared to linear when there is no adversary). 
The design of a broadcast algorithm with a lower communication cost would increase the efficiency of any algorithm relying on a broadcast channel. However, such an algorithm  cannot be deterministic.
In the following, we discuss how to use our overlay of clusters to design a robust and efficient probabilistic broadcast algorithm.

\subsubsection{Broadcasting in a network composed only of honest nodes}
If there is no adversary and the network is a good expander graph, it is easy to design an efficient broadcast algorithm: each node simply forwards the message to all its neighbors. The time complexity equals the diameter  $\diameter$ of the graph (defined as the maximum over the length of the shortest path between two vertices), while the communication cost equals the number of edges. An even more efficient solution consists in sending the message only through the edges of a precomputed tree $\mathcal T$ of depth $\diameter$, which leads to a number of messages sent equals to the number of nodes (minus one). Such a tree $\mathcal T$ can be computed via a breadth first search for instance.

Relying on the overlay of clusters built and maintained by NOW and OVER, it is possible to derive a broadcast algorithm that can tolerate an adversary and whose communication cost is $\tilde O(n)$  
The algorithm is basically the following:
\begin{itemize}
\item When a node $u$ wants to broadcast a message, it first broadcasts it to all  other nodes in its cluster using a secure broadcast algorithm (e.g., \cite{Hirt2010}
whose complexity is quadratic in the size of the cluster). 
\item A cluster receiving the message propagates it to all its neighbors in $\mathcal T$: each node forwards the message to all  nodes of the neighboring clusters iff the message is supported by a majority of nodes of the original cluster. 
\end{itemize}

The global communication cost of the algorithm is $O(n\log^{2+2\alpha} N)$ as there are $n-1$ edges in $\mathcal T$, each inducing an exchange of $O(\log^{2+2\alpha} N)$ messages.

\begin{algorithm}
\caption{NOW-Broadcast}
\label{algo_id_broadL}
\begin{algorithmic}
\REQUIRE A network with a partition and an overlay maintained by NOW.
\ENSURE The same message will be broadcast to all the nodes of the network whp.
\STATE The sender sends its message $\cal M$ using a secure broadcast algorithm to all  nodes in its cluster.
\STATE A node that has received $\cal M$ (after a majority vote) during the previous time step forwards it to all nodes of the neighboring clusters.  
\end{algorithmic}
\end{algorithm}

\subsection{NOW-Agree}

We now show how to leverage our clustering technique (i.e., overlay of clusters)
 to implement an efficient solution to the Byzantine Agreement (BA) problem \cite{Lamport1982}. 
In a network composed only of honest nodes, agreement can be easily solved by having one of the nodes in the network (e.g., the one with the smallest identifier) send its input to all the other nodes.
In the presence of an adversary controlling a fraction of the nodes and assuming that each node has a
global knowledge of the network, the algorithm proposed by King and Say~ \cite{King2010} solves the BA problem with a communication cost of $\tilde O(n\sqrt n)$. 
In our setting, assuming  NOW and OVER, 
a  cluster initiates a BA algorithm as follow:
\begin{itemize}
\item The nodes of this  cluster run a BA algorithm among themselves.
\item These nodes then  broadcast the result of the BA to the rest of the nodes of the network by calling the NOW-Broadcast algorithm described in the previous subsection.
\end{itemize}

When the identity of the cluster initiating the agreement algorithm or the cluster with the smallest identifier is not clear, we need a procedure to initiate the agreement algorithm on at least one cluster, but  not too many ones.
By assumption, we know that the size of the network is between $\sqrt N$ and $N$. Therefore, if each node initiates the agreement algorithm with probability $\log N/N$, and if after $\log^4 N$ steps no output is received, it means that no node has initiated the algorithm. In this case, each cluster proceeds by initiating the algorithm with a probability that is twice the previous one. One can show that this procedure has to be repeated at most $\log N$ times in order to ensure that each cluster receives at least one output whp. In this case, $O(\log N)$ clusters will have broadcast a message, which results in the communication cost being increased by a factor $\log N$ as compared to the original broadcast algorithm. In order for a node to choose among the multiple outputs it receives, a cluster $C$ broadcasting a message attaches to it a tag that corresponds to the lowest id   within the cluster $C$. Therefore, the final output selected by a node is the message received whose attached id is the lowest.

\begin{algorithm}
\caption{NOW-Agree}
\label{algo_id_agree}
\begin{algorithmic}
\REQUIRE A network with a partition and an overlay maintained by NOW. All nodes have an input bit. A node $u\in C$ initiates the protocol NOW-Agree. 
\ENSURE All  honest nodes agree on a bit that was proposed initially by one of them.
\STATE Nodes in $C$ run a BA algorithm among themselves (such as \cite{King2010}) and output $b$.
\STATE Cluster $C$ broadcasts $b$ to the network by using NOW-Broadcast.
\end{algorithmic}
\end{algorithm}

\subsection{NOW-Aggregate} 
In \cite{SRDS12}, an algorithm has been proposed to compute aggregate functions in a secure and scalable manner by relying on a ring overlay (i.e., an overlay in which the clusters are organized in a ring). This type of overlay can either be maintained with $\G^R$ or computed from scratch when needed. Alternatively, one can construct a tree via a breadth-first search started on a chosen cluster (for instance via a BA algorithm such as the one described in the previous subsection). This later option produces a structure of small diameter and therefore can be used to  improve the round complexity of the aggregation algorithm. 
Using the algorithm of \cite{SRDS12} leads to a communication cost of $O(n\log^2 N)$.

If privacy is not a primary concern, in the sense that the adversary can learn information about the input of honest nodes, one can rely on the algorithm proposed in \cite{MoskSeparable} to compute efficiently an aggregation function. Thereafter, we describe the pseudo code of such an algorithm in the situation in which the aggregation function considered is the sum. However, it is  straightforward to adapt it to any separable function as defined in \cite{MoskSeparable}. We refer the reader to the original paper for the description of the full algorithm and its detailed analysis.

\begin{algorithm}
\caption{NOW-Aggregate}
\label{algo_id_aggregate}
\begin{algorithmic}
\REQUIRE A network with a partition and an overlay maintained by NOW. Each node $u$ has a positive integer $y_u$ as input and knows a common integer $r$ measuring the accuracy of the estimated value obtained.
\ENSURE Each node learns an estimate of the sum of all the values, i.e. $\sum_{v\in V} y_v$.
\STATE For each cluster $C$, the nodes of $C$ broadcast their input to all the other nodes of $C$.
\STATE Each node of $C$ computes $y_{C}=\sum_{v\in C} y_v$.
\STATE Nodes of $C$ collaboratively generate $r$ independent random numbers $W_1^C,\dots W_r^C$ using the primitive $\randNum$, such that the distribution of each $W_l^C$ is exponential with rate $y_C$ for $l = 1,\dots,r$.
\STATE Each cluster broadcasts $W_1^C,\dots W_r^C$ using NOW-Broadcast.
\STATE Each node computes $\tilde W_l=\min_C W^C_l$ for $l = 1,\dots,r$. 
\STATE Each node $u$ computes $\frac{r}{\sum_{l=1}^r \tilde W_l}$, which is output as the estimate of $\sum_{v\in V} y_v$.
\end{algorithmic}
\end{algorithm}

The propagation of the minimum of each $W_i^C$, for $1\leq i \leq r$, leads to a communication cost for the broadcast algorithm of $\tilde O(nr\diameter)$.

\subsection{NOW-Sample} 

A peer sampling service provides each node with a sample of nodes picked uniformly at random~\cite{jelasity2004peer,BaldoniPQS10}. 
We can derive a  NOW-Sample algorithm  as follows: first, a node calls the peer sampling service at the level of its cluster; afterwards, its cluster initiates a biased CTRW to select a cluster. The chosen cluster then designates a node chosen uniformly at random, whose identifier is sent back to the node requiring the sample. This algorithm has a polylogarithmic round complexity and is repeated several times to obtain a sample of the desired size. 

\begin{algorithm}
\caption{NOW-Sample}
\begin{algorithmic}
\REQUIRE A network with a partition and an overlay maintained by NOW, as well as two parameters $\delta \in (0,1)$ and a node $u$ requiring a sample.
\ENSURE The node $u$ receives the id of a node chosen uniformly at random.
\STATE  The node $u \in C$ broadcasts a message to all the nodes of its cluster requiring a sample. 
\STATE $run = true$
\WHILE{$run$}
\STATE The cluster $C$ initiates \randCl, which returns $C'$.
\STATE $C'$ set $run=false$ with probability $|C'|/kl\log n$ using \randNum. Otherwise set $C=C'$.
\ENDWHILE
\STATE The cluster $C'$ selects one of its nodes $v$ at random using the primitive \randNum.
\STATE The cluster $C'$ sends $v$ to $u$ following the path used by {\randCl} in a backward manner.
\end{algorithmic}
\end{algorithm}

\subsection{Summary}

The complexities obtained by the different applications are summarized below. 

\setlength{\extrarowheight}{3 pt}

\begin{center}
\begin{tabular}{|l|c|c|}
	\hline
	Protocol & Communication cost & Round complexity\\
	\hline
   	NOW-Broadcast & $\widetilde O(n)$ & $O(\diameter)$ \\
	\hline
	NOW-Agree & $\widetilde O(n)$ & $O(\diameter)$\\
	\hline
	NOW-Aggregate & $\widetilde O(nr\diameter)$ & $O(\diameter)$\\
	(standard version) &  & \\
	\hline
	NOW-Aggregate & $\widetilde O(n)$ &  $\widetilde O(n)$\\
	(privacy-preserving version)  &  & \\
	\hline
	NOW-Sample (per sample) & $\mathrm{polylog}(N)$ & $\mathrm{polylog}(N)$\\
	\hline
\end{tabular}
\end{center}

\end{document}